\newtheorem{lemma}{Lemma}
\newenvironment{fminipage}
  {\begin{Sbox}\begin{minipage}}
  {\end{minipage}\end{Sbox}\fbox{\TheSbox}}
\newenvironment{algbox}[0]{\vskip 0.05in
\noindent 
\begin{fminipage}{5.4in}
}{
\end{fminipage}
\vskip 0.05in
}
\newcommand{\defeq}{:=}
\newcommand{\one}{\mathbbm{1}}
\newcommand{\logs}{\widehat{\log^{*}}}
\newcommand{\logss}{\widehat{\log^{**}}}
\newcommand{\sizemax}{\textsc{Size}^{\max}}
\newcommand{\size}{\textsc{Size}}
\newcommand{\parent}{\textsc{Parent}}
\begin{document}

\title{Simpler Analyses of Union-Find}

\author{
Zhiyi Huang\\
The University of Hong Kong\\
\texttt{zhiyi@cs.hku.hk}
\and 
Chris Lambert\\
Carnegie Mellon Univeristy\\
\texttt{cslamber@andrew.cmu.edu}
\and
Zipei Nie\\
LMCRC, Huawei
\footnote{Part of this work was done when the author visited Institut des Hautes Études Scientifiques.}\\
\texttt{niezipei@huawei.com}
\and
Richard Peng\\
Carnegie Mellon University
\footnote{Part of this work was done when the author was at the University of Waterloo.}\\
\texttt{yangp@cs.cmu.edu}
}

\begin{titlepage}
\thispagestyle{empty}

\maketitle

\begin{abstract}
\thispagestyle{empty}
We analyze union-find using potential functions motivated by continuous algorithms, and give alternate
proofs of the $O(\log\log{n})$, $O(\log^{*}n)$,
$O(\log^{**}n)$, and $O(\alpha(n))$ amortized cost upper bounds. The proof of the $O(\log\log{n})$ amortized bound goes as follows.
Let each node's potential be the square root of its size, i.e., the size of the subtree rooted from it.
The overall potential increase is $O(n)$ because the node sizes increase geometrically along any tree path.
When compressing a path, each node on the path satisfies that either its potential decreases by $\Omega(1)$, or its child's size along the path is less than the square root of its size:
this can happen at most $O(\log\log{n})$ times along any tree path.
\end{abstract}
\end{titlepage}

\section{Introduction}

The union-find, or disjoint set, data structure maintains disjoint sets of elements under modifications that \emph{union} two of the sets, and answers queries for \emph{finding} the set containing a given element.
It is a key primitive in many efficient algorithms~\cite{H85,FGMT13,LS13,DBS18},
and widely taught in undergraduate courses on algorithms.
Furthermore, analyses of union-find have been instrumental to developing amortized analysis and data structures with low amortized costs~\cite{GF64,HU73,Tarjan75,T79,FS89,GI91}.

Despite union-find being taught in almost every undergraduate algorithms curriculum,
proving an $O(\log^{*}n)$ upper bound
for it in class is often considered ambitious.
Doing so often entails defining bucketing schemes for the elements~\cite{G19}.
The tight $O(\alpha(n))$ bound,
where $\alpha$ is the inverse Ackermann function,
is often discussed at a high level in theory-focused courses.

We reexamine the analyses of union-find from two starting points.
First, the $\log(\size(p))$ potential function
(where $\size(p)$ is the size of the subtree rooted at $p$)
that's present in analyses of many tree-based data structures~\cite{ST85,FSST86,ST86},
including path-compression without union-by-size.
Second, recent developments in continuous algorithms and optimization often revolve around the design of more sophisticated potential functions.
We combine these to give an amortized analysis of union-find via potential functions that are functions of subtree sizes.
Compared to other analyses of union-find, these potential functions
differ in that they are naturally continuous, and do not involve
discretization/bucketing.

We work with the union-by-size version due to its closer
connection with the $\log(\size(p))$ potential function.
Section~\ref{sec:prelim} formalizes the data structure
and the framework of potential function analysis for union-find.
Sections~\ref{sec:loglog} and \ref{sec:logstar} then show that analyzing union-find using the potential
functions $\sqrt{\size(p)}$ and $\frac{\size(p)}{\log^2 \size(p)}$
give amortized costs of $O(\log\log{n})$ and $\log^{*}n$ respectively.
Section~\ref{sec:extensions} builds on these analyses to give a tighter bound of
$O(\log^{**}n)$ and sketches how to go beyond.
Such analysis requires a slightly lesser-known characterization
of the inverse Ackermann function~\cite{S06}:
we include its proof in Appendix~\ref{sec:ackermann} for completeness.
With these definitions, we also provide amortized analyses
of the $O(\alpha(n))$ amortized cost bound using the more classical recursive definitions in Section~\ref{sec:direct}.
Section~\ref{sec:conclusion} discusses our results and avenues for further simplifications.

\section{Preliminaries}
\label{sec:prelim}

Logarithms are base $2$ unless stated otherwise;
we use $\ln(x)$ for the natural logarithm.
Numerical superscripts in parentheses mean repeated applications of functions, i.e.
\[
    f^{(k)}(x) = \underbrace{f( f( \dots f}_{\mbox{\footnotesize $k$ times}}(x) \dots ) )
    ~.
\]

\subsection{Union-Find}
The union-find data structure represents disjoint sets of elements using a rooted forest.
The nodes are the elements.
Each rooted tree in the forest corresponds to a set, with the root being the representative element.
Finding the representative of a node $p$, which we will refer to as $p$'s root,
consists of traversing the path from $p$ to the root, which we will refer to as a find-path.
Studies of such structures and their heuristics,
specifically union-by-size,
started as early as the 1960s~\cite{GF64}.

The most efficient version of union-find,
first studied by Tarjan~\cite{Tarjan75},
rely on two manipulations of parent pointers:
path compression, which points everything on a find-path to the root,
and union-by-size, which points from the root of the smaller tree to the root of the larger one when we union two sets.
Figure~\ref{fig:union-by-size} presents the pseudocode of union-by-size.

\begin{figure}
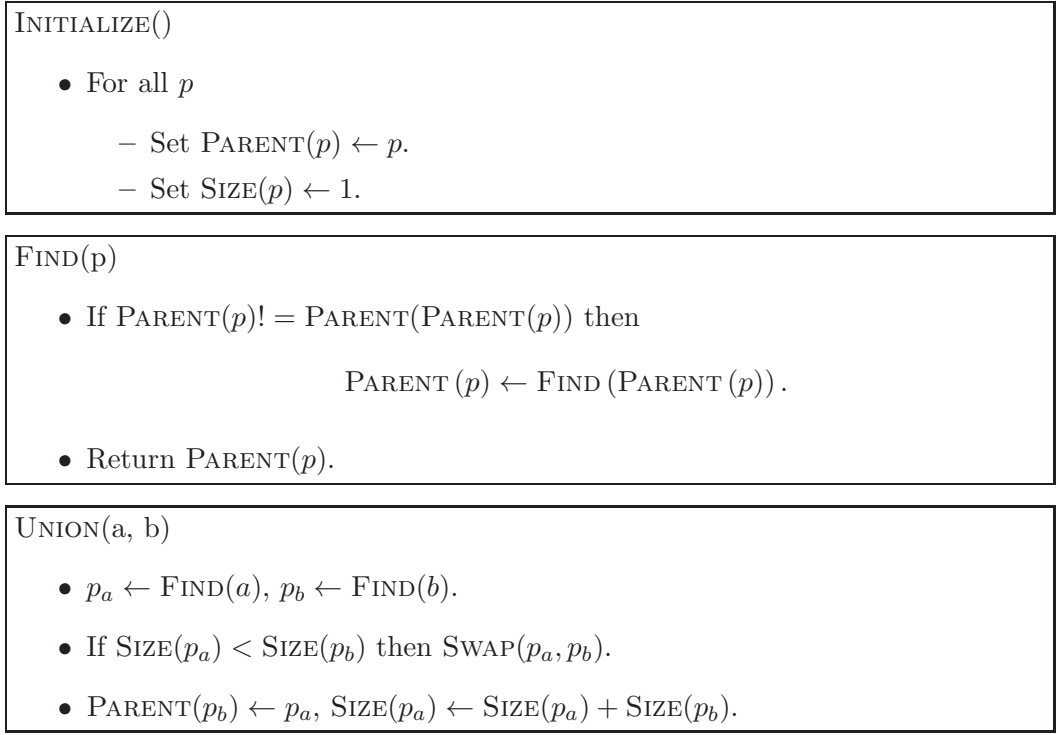

\centering
\begin{algbox}
\textsc{Initialize}()
\begin{itemize}
\item For all $p$
\begin{itemize}
    \item Set $\parent(p) \leftarrow p$.
    \item Set $\size(p) \leftarrow 1$.
\end{itemize}
\end{itemize}
\end{algbox}

\begin{algbox}
\textsc{Find}(p)
\begin{itemize}
\item If $\parent(p) != \parent(\parent(p))$ then
\[
\parent\left(p\right)
\leftarrow
\textsc{Find}\left(\parent\left(p\right)\right).
\]
\item Return $\parent(p)$.
\end{itemize}
\end{algbox}

\begin{algbox}
\textsc{Union}(a, b)
\begin{itemize}
\item $p_a \leftarrow \textsc{Find}(a)$,
 $p_b \leftarrow \textsc{Find}(b)$.
 \item If $\size(p_a) < \size(p_b)$ then $\textsc{Swap}(p_a, p_b)$.
 \item $\parent(p_b) \leftarrow p_a$,
 $\size(p_a) \leftarrow \size(p_a) + \size(p_b)$.
\end{itemize}
\end{algbox}
\caption{Pseudocode for Union-by-Size}
\label{fig:union-by-size}
\end{figure}

%Note that the only size values accessed during union steps are those of roots. Hence, from the algorithm's perspective, there is no need to decrease the size of a non-root node when we detach a node from it in a path compression. Thus, it is natural to 
The nodes' sizes evolve over time:
root nodes may acquire new descendants in union steps, while non-root nodes may lose descendants in find steps because of path compressions.
We consider two types of size functions that may differ for non-root nodes. %, called current size and max size. 
At any time, the \emph{current size} of a node $p$, denoted as $\size(p)$, is its current number of descendants.
The \emph{max size} of $p$, denoted as $\sizemax(p)$, is the number of nodes that ever become a descendant of $p$ at some point of time. $\size(p)$ evolves over time and is maintained by the algorithm, while $\sizemax(p)$ stays invariant and is only used in the analysis. The current sizes will be used in Sections \ref{sec:loglog} and \ref{sec:logstar}, while the max sizes will be used in Sections \ref{sec:extensions} and \ref{sec:direct}.

We next derive some properties of these size functions. First, we study the monotonicity of $\size(p)$.
% and $\sizemax(p)$.

\begin{lemma}\label{lem:monotonicity}
$\size(p)$ is nondecreasing in steps that keep $p$ as a root, and is nonincreasing in the other steps.
%, the value of $\size(p)$ does not increase, and $\sizemax(p)$ stays invariant.
\end{lemma}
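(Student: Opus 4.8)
The plan is to check the claim operation by operation, since $\size(p)$ can only change when the forest is modified, and that happens solely inside \textsc{Find} (through path compressions) and inside \textsc{Union} (through the single reparenting of the smaller root). The one structural fact that drives everything is that path compression only ever redirects a node's parent pointer to a current ancestor of that node, namely the root of its tree. Consequently, during a \textsc{Find} the only nodes whose descendant set can change are those on the find-path: the root $r$ keeps exactly its old descendant set, since every redirected node together with its subtree stays within $r$'s subtree; and every non-root node $v$ on the find-path can only lose descendants, ending up with a subset of its previous descendants. Every node off the find-path, and every root of every other tree, is untouched. Hence on any step occurring inside a \textsc{Find}, roots keep their size and non-roots have nonincreasing size, which is exactly what the lemma requires for such steps.

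For \textsc{Union}$(a,b)$, the two internal \textsc{Find}s are already covered, and the remaining action points the smaller root $p_b$ at the larger root $p_a$ while setting $\size(p_a) \leftarrow \size(p_a) + \size(p_b)$. Here $p_a$ stays a root and gains the $\size(p_b) \geq 1$ nodes of $p_b$'s subtree, so its size increases; $p_b$ changes from root to non-root but keeps all of its children, hence its entire descendant set, so its size is unchanged; and no other node's descendant set changes. So on this step the unique node that ceases to be a root, $p_b$, has constant (hence nonincreasing) size, while $p_a$, which remains a root, has nondecreasing size. Putting the two cases together over the whole execution gives the lemma, and the granularity of a ``step'' is immaterial, because within a single \textsc{Find} each node's size moves monotonically and within a single \textsc{Union} only $\size(p_a)$ moves.

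I do not expect a genuine obstacle; the work is bookkeeping. The point that needs care is the accounting for path compression when several consecutive find-path nodes are redirected in the same \textsc{Find}: a node $v$ on the path may then shed a whole block of descendants at once, not merely one child's subtree. This stays consistent because every node in that block was already a descendant of $v$, so $v$'s size only drops, and because every redirection target is $r$ itself, nothing ever leaves $r$'s subtree, so the unique root on the path keeps its size. A final triviality to dispose of is \textsc{Initialize}, together with the degenerate \textsc{Union}$(a,b)$ with $a$ and $b$ already in one set, where either the state is merely being created or no pointer moves and there is nothing to prove.
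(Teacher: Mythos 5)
Your proposal is correct and follows essentially the same route as the paper: both arguments reduce to observing that a node's descendant set changes only when it is an interior node of a find-path (where it shrinks) or the new root of a union (where it grows), and then matching these cases to the root/non-root dichotomy in the statement. The paper's version is merely terser; no substantive difference.
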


\begin{proof}
    In a find step, a node $p$'s set of descendants changes (in fact, becomes smaller) if and only if it is an interior node of the fine-path.
    In a union step, a node $p$'s set of descendants changes (in fact, becomes larger) if and only if it is the new root.
    Hence, for any step that keeps $p$ as a root, $p$ cannot be an interior node of a find-path, and thus its set of descendants either stays the same or becomes larger. 
    In any other step, $p$ cannot be the new root of a union step, and thus its set of descendants either stays the same or becomes smaller.
    %at time $t+1$ was a descendant of $p$ at time $t$. Therefore Lemma~\ref{lem:monotonicity} holds by the definitions of $\size(p)$.
\end{proof}

%Note that $\sizemax(\parent(p))$ is monotone, as a corollary of the above lemma, or  and balanced inductively.
The trees are reconfigured via changes to $\parent(p)$.
We next prove the monotonicity of $\sizemax(\parent(p))$ under these parent changes,
and that the trees are balanced with respect to these max sizes.

\begin{lemma}\label{lem:parent-size-monotone}
    For any non-root node $p$,
    the max-size of its parent,
    $\sizemax(\parent(p))$ is nondecreasing over time.
\end{lemma}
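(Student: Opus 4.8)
The plan is to reduce the statement to a single claim about the $\sizemax$ function: \emph{whenever a node $q$ is a proper descendant of a node $r$ that is the root of its tree, $\sizemax(q) \le \sizemax(r)$}. Granting this claim, Lemma~\ref{lem:parent-size-monotone} follows by examining when $\parent(p)$ can change. Looking at the pseudocode in Figure~\ref{fig:union-by-size}, for a node $p$ that is already a non-root the only assignments to $\parent(p)$ happen inside $\textsc{Find}$, via path compression; a $\textsc{Union}$ step assigns a parent only to a node that is a root at that moment (namely $p_b$), which is precisely the step that turns $p$ from a root into a non-root and hence carries no prior value of $\sizemax(\parent(p))$ to compare against. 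So it suffices to consider a $\textsc{Find}$ that reassigns $\parent(p)$, say at time $t$, from an old value $q$ to a new value $r$. Path compression always points $p$ to the root of its current tree, so $r$ is that root; and if $\parent(p)$ actually changes then $q$ was not already the root (otherwise the test $\parent(p) \ne \parent(\parent(p))$ fails), so at time $t$ the node $q$ is a proper descendant of the root $r$. The claim then gives $\sizemax(q) \le \sizemax(r)$, which is exactly what is needed.

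To prove the claim, I would show that every node $w$ that is ever a descendant of $q$ is also a descendant of $r$ at time $t$ (and hence ever a descendant of $r$); taking cardinalities then yields $\sizemax(q) \le \sizemax(r)$. Fix such a $w$, a descendant of $q$ at some time $s$. If $s \le t$, then $w$ and $q$ lie in one tree at time $s$; since $\textsc{Find}$ never splits a tree and $\textsc{Union}$ only merges trees, they still lie in one tree at time $t$, and since $q$ lies in $r$'s tree at time $t$ with $r$ its root, $w$ is a descendant of $r$ at time $t$. If instead $s > t$, observe that $q$ is a non-root at time $t$ and no step ever creates a new root (the node made a root in a $\textsc{Union}$, $p_a$, is already a root beforehand), so $q$ stays a non-root from time $t$ onward; by the proof of Lemma~\ref{lem:monotonicity} the descendant set of a non-root can only shrink over time, so $w$ is already a descendant of $q$ at time $t$, and the previous case applies.

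The part I expect to require the most care is pinning down exactly which events can change $\parent(p)$ and the relationship between $p$, $q$, and $r$ at the moment of a path-compression step --- in particular, arguing that a changed $\parent(p)$ is always set to the current root and that the displaced parent $q$ is a proper descendant of that root. The structural facts needed for the claim itself (that the partition into trees only coarsens, that no non-root ever becomes a root, and that a non-root's descendant set is monotonically shrinking) are all immediate from the pseudocode together with Lemma~\ref{lem:monotonicity}, so the main content is really the bookkeeping of the $\textsc{Find}$ and $\textsc{Union}$ steps rather than any new idea.
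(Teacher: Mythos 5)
Your proof is correct, but it takes a genuinely different route from the paper's. The paper proves Lemma~\ref{lem:parent-size-monotone} jointly with Lemma~\ref{lem:geometric-decrease} by a single induction over time: in a find step the displaced parent $q$ of $p$ lies on the find-path below the root $r$, so iterating the already-established invariant $\sizemax(\parent(\cdot)) \ge 2\cdot\sizemax(\cdot)$ along that path gives $\sizemax(r) \ge 2\cdot\sizemax(q) > \sizemax(q)$, while the union step is handled via the union-by-size rule together with the observation that $\size(p) = \sizemax(p)$ at the moment $p$ loses its root status. You instead bypass Lemma~\ref{lem:geometric-decrease} entirely and establish the key inequality $\sizemax(q) \le \sizemax(r)$ directly from the definition of $\sizemax$, by showing that every node ever a descendant of $q$ is a descendant of $r$ at the time of the compression (using that the partition into trees only coarsens, that a non-root never becomes a root again, and that a non-root's descendant set only shrinks, as in the proof of Lemma~\ref{lem:monotonicity}). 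Your argument is more self-contained --- it needs no induction and no factor-of-two balance condition --- at the cost of more bookkeeping about the forest dynamics; the paper's joint induction is shorter here only because Lemma~\ref{lem:geometric-decrease} must be proved anyway and the monotonicity falls out of it essentially for free.
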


\begin{lemma}\label{lem:geometric-decrease}
    %The value of $\sizemax(\parent(p))$ never decreases. %Additionally, 
    At any time, for any non-root node $p$ and its parent at the time $\parent(p)$, we have
    \[
        \sizemax(\parent( p) ) \ge 2 \cdot \sizemax(p)
        ~.
    \]
\end{lemma}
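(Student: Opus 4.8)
The plan is to track the moment at which the current parent pointer $\parent(p)$ was last set, and to argue that at that moment the new parent already had at least twice as many (eventual) descendants as $p$ ever will. Concretely, fix a time $t$ and let $q = \parent(p)$ at time $t$. Let $t_0 \le t$ be the last time at or before $t$ that the pointer $\parent(p)$ was changed to point to $q$ (if $p$ has never changed parent, $t_0$ is the time of the \textsc{Union} that first made $q=\parent(p)$; more carefully, $p$ acquires a non-self parent only in a \textsc{Union} step or a \textsc{Find} compression step). The key observation is that whenever $\parent(p)$ is set to $q$ — whether in union-by-size or in a path compression — at that instant $q$ is a strict ancestor of $p$ lying above $p$ on a find-path, and moreover $q$'s subtree at that instant already contains $p$'s entire subtree at that instant plus at least one node not below $p$.

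First I would handle the \textsc{Union} case: when $\parent(p_b) \leftarrow p_a$ is executed, the code guarantees $\size(p_a) \ge \size(p_b)$, and after the update $\size(p_a) = \size(p_a)+\size(p_b) \ge 2\,\size(p_b)$. So the set of descendants $p_a$ has immediately after this step — which is a subset of the set it ever has, hence of size $\le \sizemax(p_a)$ — already has size $\ge 2\,\size(p_b)$; and $\size(p_b)$ at that instant is at least the number of descendants $p_b$ had at any earlier time is not what we want, rather we want it to dominate $\sizemax(p_b)$. This is where Lemma \ref{lem:monotonicity} and Lemma \ref{lem:parent-size-monotone} come in. Second, the \textsc{Find} compression case: when $\parent(p)$ is redirected to the root $r$ of the find-path, at that instant $r$'s subtree contains everything on and below the find-path, in particular all current descendants of $p$ and also $\parent(p)$ (the old parent), which is not a descendant of $p$; so $r$'s current subtree has size $\ge \size(p)+1 \ge 2$ — but again we need the bound against the \emph{max} sizes, not the current ones.

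The clean way to make all of this uniform, and the step I expect to be the crux, is this: show that at the instant $\parent(p)$ is set to $q$, we have $\sizemax(q) \ge 2\,\sizemax(p)$ \emph{evaluated at the end of time}. For the direction $\sizemax(q) \ge 2 \cdot (\text{something})$: at that instant $q$ is an ancestor of $p$, so every node that ever becomes a descendant of $p$ — at that time or later — while $p$ still hangs below $q$ is also a descendant of $q$; and $p$ only ever moves to ancestors of its current position (compression and union both move $p$ up within its tree, and unions only add nodes above the root). Hence $\sizemax(p)$'s witnessing set, restricted to times while $q$ is still an ancestor of $p$, injects into $q$'s descendant set; combined with the ``$+1$'' slack (the extra node $q$ had that was not below $p$ at the linking instant, which by Lemma \ref{lem:parent-size-monotone}-style monotonicity of parents' max-sizes persists) this should be pushed up to the factor $2$. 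For the \textsc{Union} case the factor $2$ is immediate from union-by-size as computed above; for the \textsc{Find} case one instead observes that $p$'s parent just before compression, call it $p'$, satisfied $\sizemax(p') \ge 2\,\sizemax(p)$ inductively (it was set at some earlier time), and $r = \parent(p)$ after compression is an ancestor of $p'$, so $\sizemax(r) \ge \sizemax(p') \ge 2\,\sizemax(p)$ using that max-size is nondecreasing along ancestor chains. Finally, I would invoke Lemma \ref{lem:parent-size-monotone} to pass from the value of $\sizemax(\parent(p))$ at time $t_0$ to its value at time $t$, which only increased, completing the bound $\sizemax(\parent(p)) \ge 2\,\sizemax(p)$ at time $t$.
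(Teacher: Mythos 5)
Your proposal is correct and follows essentially the same route as the paper: induction over the events that set $\parent(p)$, with the union case handled by union-by-size together with the observation (via Lemma~\ref{lem:monotonicity}) that $\size(p_b)=\sizemax(p_b)$ at the instant $p_b$ becomes a non-root, and the compression case handled by chaining the inductive hypothesis up the old find-path to the root. The ``witnessing-set injection'' digression in your last paragraph is unnecessary --- the two concrete cases you then give already close the argument --- and the final appeal to Lemma~\ref{lem:parent-size-monotone} is superfluous since $\sizemax(q)$ is static once $q$ is fixed.
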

\begin{proof}[Proof of Lemmas~\ref{lem:parent-size-monotone} and \ref{lem:geometric-decrease}]
    We prove the lemmas by induction on time.
    Initially, the lemmas hold vacuously because all nodes are roots.
    
    In a union step, a node $p$ becomes the child of the new root $r = \parent(p)$. %, and the new root $r$'s current size becomes the sum of $p$ and $r$'s old sizes.
    We have
    \[
        \size(\parent(p)) \ge 2 \cdot \size(p)
    \]
    after the step by definition of union-by-size.
    Further, at this point, we have $\size(p) = \sizemax(p)$ because $p$ can no longer acquire new descendants as a non-root in the future.
    Further, we have $\sizemax(\parent(p)) \ge \size(\parent(p))$ by the definitions of current and max sizes.
    %Finally, the parents of other non-root nodes stay unchanged.
    Hence, the lemma continues to hold.

    In a find step, path compression involving
    some non-root nodes $p$ may cause $p$'s parent
    to become the root node, $r$. 
    By the induction hypothesis that the lemma holds before the step, when $r$ was an ancestor of $p$, we get that $\sizemax(r) \ge 2 \cdot \sizemax(p)$.
    Hence, the lemma continues to hold.
\end{proof}

We remark that the balanced property of max-sizes
(Lemma~\ref{lem:geometric-decrease})
may not hold for current sizes.
This is because path compressions can
decrease some nodes' current sizes.
Consider for example a complete binary tree of height $3$. A path compression on the leftmost path leaves the root's original left child with only one grandchild, i.e., its original right child. That grandchild has size $3$, but its parent has size $4$.
%In fact, we can create an arbitrarily long path of nodes with one child by iterating the following process:
%union a node $p$ against a very high degree star $p$,
%and performing finds/path compresses on all children of $p$
%except the one containing the long path currently hanging off of $p$.
%Repeating this $k$ times gives a length $k$ path hanging off
%the node $p$.
%\richard{added a description of why size property no longer
%holds after we start path compressing}
%\todo{draw either of these two examples}

Finally, as a corollary of max sizes' geometric decrease along tree paths (Lemma~\ref{lem:geometric-decrease}), we show that for slightly sublinear functions, the sum of function values for the max sizes is at most linear in the number of nodes. %function.

\begin{lemma}\label{lem:sizemax-upper-bound}
    We have      
    \[
        \sum_p \frac{\sizemax(p)}{(1+\log\sizemax(p))^2}=O(n)
    \] where $p$ ranges over all $n$ nodes.
\end{lemma}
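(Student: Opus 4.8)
The plan is to exploit the geometric decrease of max sizes along tree paths (Lemma~\ref{lem:geometric-decrease}) to bound, for each scale, the number of nodes whose max size lies in that scale. Since $\sizemax(\parent(p)) \ge 2\sizemax(p)$, the ancestors of any node have max sizes that at least double at each step; equivalently, a node of max size $s$ has at most $\log s$ proper ancestors. I would first group the nodes by the dyadic interval containing their max size: let $N_k$ denote the number of nodes $p$ with $\sizemax(p) \in [2^k, 2^{k+1})$. The key combinatorial claim is that $N_k = O(n / 2^k)$. To see this, observe that any node $p$ with $\sizemax(p) \ge 2^k$ has, at the moment it last acquired that max size (or at any fixed time thereafter), at least $2^k$ nodes that were at some point its descendants; more carefully, I would argue that the nodes with $\sizemax \ge 2^k$, restricted to the forest structure at a suitable time, partition into subtrees each of size $\ge 2^k$, or use a charging argument: charge node $p$ with $\sizemax(p) \in [2^k,2^{k+1})$ to $2^k$ distinct ``witness'' elements among its historical descendants, and show each element is charged $O(1)$ times per scale.

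Granting $N_k = O(n/2^k)$, the sum is estimated by
\[
    \sum_p \frac{\sizemax(p)}{(1+\log \sizemax(p))^2}
    \;\le\; \sum_{k \ge 0} N_k \cdot \frac{2^{k+1}}{(1+k)^2}
    \;=\; O(n) \cdot \sum_{k \ge 0} \frac{1}{(1+k)^2}
    \;=\; O(n),
\]
using $\log \sizemax(p) \ge k$ on the $k$-th scale and the convergence of $\sum_k (1+k)^{-2}$. The role of the $\log^2$ denominator is exactly to make this series summable — a bare $\sizemax(p)$ would give $\sum_k N_k 2^k = O(n \log n)$, which is why the lemma needs the slightly sublinear function.

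The main obstacle is establishing $N_k = O(n/2^k)$ cleanly, because max sizes are defined over the entire history rather than at a single snapshot, so one cannot directly say the relevant subtrees are disjoint at one time. I would handle this by fixing, for each node $p$, the earliest time $t_p$ at which $\sizemax(p)$ reaches at least $2^k$ and the set $D_p$ of descendants of $p$ at time $t_p$ (of size $\ge 2^k$), and then arguing that for two nodes $p, p'$ in the same scale, either $D_p \cap D_{p'} = \emptyset$ or one of $p, p'$ is an ancestor of the other — but a node has at most $\log(2^{k+1}) = k+1$ ancestors in that scale by Lemma~\ref{lem:geometric-decrease}, so each element of $\bigcup_p D_p$ is counted $O(k)$ times, giving $N_k \cdot 2^k = O(nk)$, i.e.\ $N_k = O(nk/2^k)$. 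This weaker bound still suffices, since $\sum_k (nk/2^k)\cdot 2^{k+1}/(1+k)^2 = O(n)\sum_k k/(1+k)^2$ diverges — so I would instead need the sharper disjointness and should be careful here; alternatively, a direct amortized/token argument tracking when a node's $\sizemax$ crosses each power-of-two threshold (each such crossing in a union step can be charged to the $\ge 2^k$ newly-acquired descendants, which are fresh at that scale) gives the clean $N_k = O(n/2^k)$ that the computation above requires.
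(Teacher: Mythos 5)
Your overall plan --- bin nodes by the dyadic scale of $\sizemax$, show $N_k = O(n/2^k)$, and sum $\sum_k N_k \cdot 2^{k+1}/(1+k)^2 = O(n)$ --- is sound, and it is essentially a binned repackaging of the paper's argument (the paper charges each $p$'s contribution directly to its $\sizemax(p)$ historical descendants and observes that the charges received by a fixed node $q$ form the convergent series $\sum_i 1/i^2$, because the max sizes of $q$'s successive historical ancestors at least double). But you never actually establish the one claim your computation needs. Your disjointness argument, as you execute it, yields only $N_k = O(nk/2^k)$, which you yourself correctly observe is too weak since $\sum_k k/(1+k)^2$ diverges. And your fallback token argument is flawed: when $\sizemax(p)$ first crosses $2^k$ in a union step, $p$ need not acquire anywhere near $2^k$ new descendants --- a tree of size $2^k-1$ absorbing a singleton crosses the threshold while gaining exactly one new descendant --- so there are no ``$\ge 2^k$ fresh'' witnesses to charge.

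The fix is that your disjointness argument, pushed to the right granularity, already gives full disjointness rather than $O(k)$-fold overlap. If $q$ belongs to both $D_p$ and $D_{p'}$, then $p$ and $p'$ are both historical ancestors of $q$ (or equal to it). Order the historical ancestors of $q$ by the first time each becomes an ancestor: when $p_i$ first becomes an ancestor of $q$, the step must be a union making $p_i$ the new root absorbing the whole tree that then contains both $q$ and $p_{i-1}$, so $p_{i-1}$ becomes a proper descendant of $p_i$ and Lemma~\ref{lem:geometric-decrease} gives $\sizemax(p_i) \ge 2\,\sizemax(p_{i-1})$. Hence $q$ has at most \emph{one} historical ancestor per dyadic scale, not $k+1$ of them: two distinct ones in $[2^k, 2^{k+1})$ would contradict the doubling. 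This yields $N_k \cdot 2^k \le n$, and your summation then closes the proof. The missing ingredient --- doubling of max sizes along the chain of \emph{historical} (not merely simultaneous) ancestors --- is exactly the content of the paper's proof, so once you supply it the two arguments coincide up to the binning.
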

\begin{proof}

    Distribute each node $p$'s contribution among all nodes that were once its descendants. That is, for each $q$ that was once $p$'s descendent, we charge
    \[
        \frac{1}{(1+\log\sizemax(p))^2}
    \]
    to $q$.
    We consider the total charged to some node $q$.

    Let $p_0 = q, p_1, \dots, p_k$ be the set of nodes that were once $q$'s ancestors, sorted by the time $t_i$ when $p_i$ is an ancestor of $q$.%
    \footnote{Since the set of ancestors of $q$ grows by at most one element in each step, the numbers $t_i$ ($1\le i\le k$) are distinct.}
    Because $p_{i-1}$ is an ancestor of $q$ at time $t_{i-1}$, the nodes $p_{i-1}$ and $q$ have the same root at time $t_i-1$. The next step must be a union step, and $p_{i-1}$ turns into a proper descendant of $p_i$ at time $t_i$. By Lemma~\ref{lem:geometric-decrease}, we have \[\sizemax(p_i)\ge 2\cdot\sizemax(p_{i-1})\] at time $t_i$. 
    %Because $p_{i-1}$ is a non-root node at time $t_i$. it is also a non-root node at a later time. By Lemma~\ref{lem:monotonicity}, we have \[\sizemax(p_i)\ge 2\cdot\sizemax(p_{i-1})\] eventually. 
    Therefore, the total charged to $q$ is at most     
    \[
        \sum_{i = 1}^\infty \frac{1}{i^2} = O(1).
    \]
    Summing over all $n$ nodes then gives $O(n)$.
\end{proof}

\subsection{Potential Function Analysis}

We will design a potential function $\Phi(p) \ge 0$ for each node $p$, whose value depends on either the current size $\size(p)$ (Sections~\ref{sec:loglog} and~\ref{sec:logstar}) or the max sizes $\sizemax(p)$ and $\sizemax(\parent(p))$ (Sections~\ref{sec:extensions} and~\ref{sec:direct}).
We will then consider an overall potential as follows
\[
    \sum_{\mbox{\footnotesize non-root node $p$}} \Phi(p)
    ~.
\]

\begin{lemma}\label{lem:amortized-cost}
Let $h$ be a parameter (that can depend on $n$,
and $\Phi(p)$ be a potential function defined on the nodes
satisfies the following three properties:
    \begin{enumerate}
        \item (Monotonicity) $\Phi(p)$ is nonincreasing after $p$ becomes a non-root node;
        \item (Boundedness) At any time, for any node $p$, we have \[\frac{\Phi(p)}{h}=O\left(\frac{\sizemax(p)}{(1+\log\sizemax(p))^2}\right);\]
        \item (Amortized Path Length) In each find step, the find-path's length is at most $O(h)$ plus a constant times the decrease of the overall potential.
    \end{enumerate}
    Then, the total cost of $n$ union steps and $m$ find steps is at most $O \big( (m + n) \cdot h \big)$.
\end{lemma}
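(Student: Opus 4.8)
The plan is to turn the three hypotheses into a direct accounting of total cost via the overall potential $\Psi \defeq \sum_{\text{non-root }p} \Phi(p)$. First I would establish that $\Psi$ starts at $0$ (all nodes are roots after \textsc{Initialize}, so the empty sum is $0$) and remains nonnegative throughout, since each $\Phi(p) \ge 0$. The goal is then to bound the sum over all operations of (cost of operation) by $O((m+n)h)$, and the natural device is to charge the potential drops against find-path lengths using the Amortized Path Length hypothesis, while separately controlling the total potential \emph{increase} over the whole execution.

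Next I would account for potential increases. By the Monotonicity hypothesis, once a node $p$ is a non-root its $\Phi(p)$ never increases; so the only way $\Psi$ can go up is when a root becomes a non-root, which happens exactly once per node and only during \textsc{Union} steps (a root $p_b$ is attached under $p_a$). At that moment $p_b$'s term enters the sum with value $\Phi(p_b)$, and by the Boundedness hypothesis this is $O\!\bigl(h \cdot \tfrac{\sizemax(p_b)}{(1+\log\sizemax(p_b))^2}\bigr)$. Summing over all $n$ nodes and invoking Lemma~\ref{lem:sizemax-upper-bound}, the cumulative potential increase over the entire run is $O(nh)$. (I should also note the one subtlety: inside a single \textsc{Union} the two internal \textsc{Find} calls are handled as find steps, and the actual pointer reattachment of $p_b$ only adds $\Phi(p_b)$ once and changes no other non-root term, since $p_a$ stays a root and descendants of $p_b$ keep their parents.)

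Then I would combine the two pieces. Let $D$ denote the total decrease of $\Psi$ summed over all find steps and $I$ the total increase; we have just shown $I = O(nh)$, and since $\Psi$ ends $\ge 0$ and starts at $0$, we get $D \le I = O(nh)$. By the Amortized Path Length hypothesis, the length of the find-path in any given find step is $O(h)$ plus a constant times that step's potential decrease; summing over all find steps — including the $2n$ find calls triggered inside the $n$ unions and the $m$ explicit finds — the total find-path length is $O((m+n)h) + O(D) = O((m+n)h)$. Each union also does $O(1)$ extra work beyond its two finds, contributing $O(n)$. Adding the $O(n)$ cost of \textsc{Initialize}, the total cost is $O((m+n)h)$, as claimed.

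The main obstacle is the bookkeeping around what exactly counts as a ``step'' and ensuring the potential can only increase at the single moment a node is first attached: one must verify that path compression inside a find step never raises $\Phi$ of any node (guaranteed by Monotonicity, since all nodes whose parent changes are already non-roots) and that a union's reattachment raises $\Psi$ by exactly the newly-non-root node's potential and nothing else. Once that is pinned down, the rest is the routine telescoping/charging argument sketched above, with Lemma~\ref{lem:sizemax-upper-bound} doing the heavy lifting for the $O(nh)$ total-increase bound.
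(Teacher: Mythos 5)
Your proposal is correct and follows essentially the same argument as the paper: potential increases only when a node first becomes a non-root, bounded per node by the Boundedness property and summed to $O(nh)$ via Lemma~\ref{lem:sizemax-upper-bound}, after which the total decrease is capped by the total increase and the Amortized Path Length property converts find-path lengths into $O((m+n)h)$. The extra bookkeeping you flag (the finds inside unions, the exact moment $\Psi$ can rise) is a harmless elaboration of what the paper leaves implicit.
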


\begin{proof}
    The total cost of $n$ union steps is $O(n)$.
    It remains the analyze the total cost of $m$ find steps.

    By the monotonicity property, the overall potential can only increase due to a union step, after which we have a new non-root node $p$ that contributes to the summation.
    Further, by the boundedness property, the increase due to each node $p$ is at most:
    \[
        h \cdot O\left(\frac{\sizemax(p)}{(1+\log\sizemax(p))^2}\right)
    \]
    Hence, by Lemma~\ref{lem:sizemax-upper-bound}, the total increase of the overall potential in $n$ union steps is at most $O ( n h )$.

    Finally, by the amortized path length property, the total cost of $m$ find steps is upper bounded by $O(m h)$ plus a constant times the total decrease of the overall potential in $m$ find steps.
    Since the potential is initially $0$ and always non-negative, the latter part is at most the total increase of the overall potential in $n$ union steps, i.e., at most $O ( n h )$.

\end{proof}

We follow the convention of not explicitly tracking time in our notations.
However, because almost all of our analyses deal with how $\Phi(\cdot)$ changes over time,
it's worth remarking that $\parent(p)$, $\size(p)$, and $\Phi(p)$
are quantities that change over time,
while $\sizemax(p)$ is static over time.

\section{$O(\log\log{n})$ Analysis}
\label{sec:loglog}

Consider a potential function
\[
    \Phi \left( p \right)
    \defeq  
    \sqrt{\size\left(p\right)}
    ~.
\]

Since $\Phi(p)$ is an increasing function of $\size(p)$, the first property of potential function analysis holds by Lemma~\ref{lem:monotonicity}. The second property also holds because the exponent is less than $1$. 

We next verify the third property.
Consider a find step along $p_1 \to p_2 \to \cdots \to p_m \to r$, where $r$ is a root.
Suppose that the current sizes of $p_1, p_2, \dots, p_m$ are $s_1, s_2, \dots, s_m$. After the path compression, the size of $p_1$ stays the same, while the new sizes of $p_2, \dots, p_m$ are
\[
    s_{2} - s_1, s_{3} - s_2, \ldots, s_m - s_{m-1}
    ~.
\]

For every $2 \le i \le m$, the potential of $p_i$ decreases by $\sqrt{s_i} - \sqrt{s_i - s_{i-1}}$.
Either this is at least $\frac{1}{2}$, covering the cost of edge $p_{i-1} \to p_i$, or we have
\[
\sqrt{s_i - s_{i - 1}}
\geq
\sqrt{s_i} - \frac{1}{2}.
\]
Squaring both sides gives
\[
s_i - s_{i - 1}
\geq
s_i - \sqrt{s_i} + \frac{1}{4} > s_i - \sqrt{s_i}.
\]
Canceling $s_i$ on both sides and talking  logarithm gives
\[
    \log s_i \ge 2 \log s_{i-1}
    ~.
\]
Since $1 \le s_1 < s_2 < \dots < s_m \le n$, this can happen only 
$\log\log{n}$ times. Therefore, the amortized cost is $O(\log\log{n})$ by Lemma~\ref{lem:amortized-cost}.

\section{$O(\log^{*}n)$ Analysis}
\label{sec:logstar}

%Note that the initial 
Define the potential function as
\[
\Phi\left( p \right)
:=
\frac{\size\left(p\right) }{\left(3+\log \size\left(p\right) \right)^2}
~.
\]

We first verify that the potential is increasing in $\size(p)$. This is where we need the constant $3$ (instead of $1$) in the definition.
The derivative of $\frac{x}{(3+\log x)^2}$ at $x \ge 1$ is
\begin{equation}
    \label{eqn:logstar-derivative}
    \frac{1}{(3+\log x)^2} - \frac{2}{\ln{2} (3+\log x)^3} 
    \ge \frac{1}{(3+\log x)^2} - \frac{2}{3 \ln 2(3+\log x)^2} 
    \ge \frac{1}{30(3 + \log x)^2}
    ~,
\end{equation}
where the second last inequality follows
from $\log x \ge 0$ for all $x \geq 1$. 

By the monotonicity of $\Phi(p)$ and that $\size(p)$ is nonincreasing for non-root nodes $p$ (Lemma~\ref{lem:monotonicity}), the first property of potential function analysis holds. 
Further, the second property follows by the monotonicity of $\Phi(p)$ and $\size(p) \le \sizemax(p)$.

We next verify the third property.
Consider a find step along $p_1 \to p_2 \to \cdots \to p_m \to r$, where $r$ is a root.
Suppose that the current sizes of $p_1, p_2, \dots, p_m$ are $s_1, s_2, \dots, s_m$. After the path compression, the size of $p_1$ stays the same, while the new sizes of $p_2, \dots, p_m$ are
\[
    s_{2} - s_1, s_{3} - s_2, \ldots, s_m - s_{m-1}
    ~.
\]

% We first upper bound the total potential of the initial tree.
% The contribution of a node $p_0$
% to its ancestors $p_1 \ldots p_d$ is
% \[
% \sum_{i = 1}^{d} \frac{1}{\log^2
% \left( 10 + \sizemax\left(p_i  \right) \right) }.
% \]
% The doubling of sizes along paths of the initial tree gives
% $\sizemax(p_i) \geq \sizemax(p_{i - 1})$, which telescopes to
% $\log \sizemax(p_{i}) \geq i$.
% So this summation is at most
% \[
% \sum_{i = 1}^{\infty} \frac{1}{i^2}
% \leq
% O(1),
% \]
% and the initial potential is at most $O(n)$.

% Now consider a path compression on nodes $p_0, p_1, \ldots, p_m$
% with sizes $s_0, s_1, \ldots, s_m$.
% In particular, we consider the decrease in the potential of $p_i$.

%The decrease to potential $\Phi(p_{i})$ is
%\chris{I think the denominator should actually be $(1 + s_i - s_{i-1})$}
%\richard{inequality, loosened that term to simplify calculations}
%\[
%\frac{s_i}{\log^2 \left( 1 + s_i \right)}
%-
%\frac{s_i - s_{i - 1}}{\log^2 \left( 1 + s_i - s_{i - 1} \right)}
%\geq
%\frac{s_i}{\log^2 \left( 1 + s_i \right)}
%-
%\frac{s_i - s_{i - 1}}{\log^2 \left( 1 + s_i \right)}
%=
%\frac{s_{i - 1}}{\log^2 s_i}.
%\]`
%To measure how much we need to change for this to drop by $1$,
%taking derivatives gives \chris{I don't see why this works as written}
%\[
%\frac{\partial}{\partial x}
%\frac{x}{log^2{x}}
%=
%\frac{1}{\log^2{x}} - \frac{2}{\log^{3}x},
%\]
%so for $x \geq 100$, the decrease is at least $\frac{1}{2 \log^2{x}}$.

We next analyze the decrease of node $p_i$'s potential for $2 \le i \le m$.
%Since there are at most a constant number of $p_i$ with $s_i < 4$, it suffices to consider $p_i$ with $s_i \ge 4$.
%
By Equation~\eqref{eqn:logstar-derivative} and that $\frac{1}{30(3+\log x)^2}$ is decreasing in $x$, the derivative of $\frac{x}{(3+\log x)^2}$ is at least $\frac{1}{30(3+\log s_i)^2}$ for $s_i - s_{i-1} \le x \le s_i$.
Hence, when $p_i$'s size changes from $s_i$ to $s_i - s_{i-1}$, its potential decreases by at least
\[
\frac{s_{i-1}}{30 \left(3+\log s_i \right)^2}
~.
\]

Either this is at least $\frac{1}{270}$, covering the cost of edge $p_{i-1} \to p_i$, 
%for $s_i > 1000$ (which happens to all but a constant number of nodes on the compressed path),
%the potential decreases by at least $\Omega(1)$
or we have:
\begin{equation}
    s_{i - 1} < \frac{1}{9} \left( 3+\log s_i \right)^2 = \Big( 1 + \frac{1}{3} \log s_i \Big)^2
    ~.
\label{eq:decrease}
\end{equation}
%whenever $s_{i - 1} \ge \log^2(10 + s_i)$.

% Thus the amortized cost is upper bounded by the number of nodes on the path where
% \[
% s_{i - 1} < \log^2\left(10 + s_i \right)
% \]

The function $\left( 1 + \frac{1}{3} \log x \right)^2$ is of higher order than $\log{x}$, but iterating it twice is less than $\log{x}$ for any $x \ge 8$.
This is easy to verify numerically,
so we defer its proof to Lemma~\ref{lem:badlogtwice}
in Appendix~\ref{app:logstar}.
%It remains to combine this rapid decrease with the
%monotonicity of sub-tree sizes along the path:
Since $1 \le s_1 < s_2 < \dots < s_m \le n$, a size decrease satisfying
Equation~\eqref{eq:decrease} can happen at most
most $O(\log^{*}n)$ times along the find-path.
Hence, the amortized cost is $O(\log^{*}n)$ by Lemma~\ref{lem:amortized-cost}.

\section{Further Extensions}
\label{sec:extensions}

This section combines the ideas in Sections~\ref{sec:loglog}
and~\ref{sec:logstar} to prove better upper bounds such as $O(\log^{**}{n})$ for the amortized cost of union-find.
The following analysis uses the max sizes of the nodes. 

We first give an alternative $O(\log^{*}n)$ proof.
%Replacing $\sizemax$ with $\size$ in the denominator also works, but the analysis would be more complex.
Consider a potential function
\[
\Phi_1 \left( p \right)
\defeq  
\frac{\sqrt{\sizemax\left(p\right)}}
{
1 + \log\sizemax\left(\parent(p)\right) 
}
~.
\]
We opt for this choice of ``nice functions'' at the cost of a large constant in the asymptotic bound. 
The constant would be much smaller if we use \[
\frac{\sizemax\left(p\right)^a}
{(1+\log\parent\left(\sizemax\left(p\right)\right))^b}
\]
with constants $a \to 1$ and $b \to 0$.

%It is strictly less than the level-$0$ potential defined in Section~\ref{sec:loglog}, so 
%where we artificially define $\sizemax(\parent(p)) = 1$ when $p$ is a root.
%This treatment ensures that the denominator is independent of $\sizemax(p)$ when $p$ is a root.

%By Lemma~\ref{lem:monotonicity}, the numerator is fixed in a step that does not keep $p$ as a root node. 
The numerator is fixed throughout.
By Lemma~\ref{lem:parent-size-monotone}, $\sizemax(\parent(p))$ is increasing over time for any non-root node $p$.
Hence the first property of potential function analysis holds. 
The second property is straightforward because of the exponent $\frac{1}{2}$. 

We next verify the third property.
Consider a find step along $p_1 \to p_2 \to \cdots \to p_m$, where $p_m=r$ is the root.
Let the max sizes of $p_1, p_2, \dots, p_m$ at the moment be $s_1, s_2, \dots, s_m$.
% Now consider a path compression involving nodes
% $p_0, p_1, \ldots, p_m$
% with max subtree sizes $s_0, s_1, \ldots, s_m$.
Consider any $p_i$ where $1 \le i \le m-2$.
%Since we can ignore constants, it suffices to consider nodes $p_i$ with $i \leq m - 2$.
The potential of this node changes from
\[
\frac{\sqrt{s_i}}{1 + \log s_{i + 1}}
\]
to
\[
\frac{\sqrt{s_i}}{1 + \log s_{m}}
\]
because 
%node $i - 1$ gets detached from it, and 
its parent becomes $p_m$ after the path compression.
%We remark that the numerator does not change because of Lemma~\ref{lem:monotonicity}.

% First, consider the decrease caused by the change in
% numerator.
% This is at least
% \[
% \frac{s_{i - 1}}{10 \sqrt{s_i} \left( 1 + \log s_{i + 1} \right)}
% \]
% So as long as
% \[
% s_{i - 1} \geq s_i^{3/4}
% \qquad
% \text{and}
% \qquad
% s_{i}^{1/4} \geq \log s_{i + 1},
% \]
% we have a potential decrease of at least $0.1$.

% So it suffices to upper bound the number of indices such that
% $s_{i - 1} < s_i^{3/4}$, or equivalently
% \[
% \log s_{i - 1}
% <
% \log s_{i}
%  - 0.1.
% \]
% We do this by considering the changes in potential caused
% by the denominator contributions.
% Since the decrease in size by $s_{i - 3}$ is non-negative,
% we can ignore it to get a lower bound in decrease of at least
Since $\log s_{i+1} \le \log s_{i+2} - 1 \le \log s_m - 1$ by Lemma~\ref{lem:geometric-decrease}, the decrease is
\[
\sqrt{s_{i}}
\left( \frac{1}{1 + \log s_{i+1}}
- \frac{1}{1 + \log s_m} \right)
\geq
\sqrt{s_{i}}
\left( \frac{1}{1 + \log s_{i+1}}
- \frac{1}{2+\log s_{i+1}} \right)
\ge
\frac{\sqrt{s_{i}}}{2 \left( 1 + \log s_{i+1} \right)^2}
~.
\]
Either this is at least $\frac{1}{2^{13}}$, covering the cost of edge $p_i \to p_{i+1}$, or we have
\begin{equation}
\label{eqn:layer-1}
s_{i}
\leq
\left( \frac{1 + \log s_{i + 1}}{8} \right)^4
~.
\end{equation}
%Since the $s_i$s are monotonically increasing in $i$
%and bounded in $[1, n]$,
%this can only happen $O(\log\log{n})$ times along the path.
%So the amortized cost of each path compress is $O(\log\log{n})$.

%We can further set up another term to account for the number of nodes along the path where $s_{i} < (1 + \log s_{i+1})^{4}$.
%In the design of $\Phi^{(1)}$,
Recall that the $\log$ function represents the number of times a node's value
can halve, and the $\log^*$ function represents the number of times we can take $\log$.
%So the corresponding term for the number of times we take $\log$ is the $\log^{*}$ function.
We define a variant of it for our modified iteration of taking $\log$ and raising to the fourth power:
\[
\logs\left( x \right)
\defeq
\min_{k}
\left[
\left( y \leftarrow \left( \frac{1 + \log y}{8} \right)^4 \right)^{\left( k \right)}
\left( x \right)
\leq
1
\right]
~.
\]
That is, the number of times to iteratively apply
$x \leftarrow (\frac{1 + \log x}{8})^4$
until the result is at most $1$.
This is asymptotically the same as $\log^{*}$.
Intuitively, this is because raising to the fourth power is of lower order compared to the inverse of logarithm (i.e., the exponential function).
Hence, the value still decreases almost as fast as taking logarithm.
We defer the proof to Lemma~\ref{lem:logs} of Appendix~\ref{sec:logstar}.

By the definition of $\logs$ and that $1 \le s_1 < s_2 < \dots < s_m \le n$, Equation~\eqref{eqn:layer-1} holds for at most $O\big(\logs n\big)$ different $p_i$'s.
Hence, the amortized cost is $O(\logs n) = O(\log^* n)$ by Lemma~\ref{lem:amortized-cost}.

We next define the level-$2$ potential of a node:
\[
\Phi_2 \left( p \right)
\defeq
\sqrt{\sizemax\left( p \right)}
\cdot
\left(
\frac{1}{1 + \log \left( \sizemax \left( \parent \left( p \right)\right) \right)}
+
\frac{1}{1 + \logs \left( \sizemax \left( \parent \left( p \right)\right) \right)}
\right).
\]
Note that it is critical we use $\sqrt{\sizemax(p)}$ instead
of $\sizemax(p)$ in the numerator:
the $\logs$ of a node's ancestors' sizes do not decrease fast enough to allow telescoping as in the proof of Lemma~\ref{lem:sizemax-upper-bound} (even if we raise the numerator to some power $d > 1$ like in Section~\ref{sec:logstar}).

The first two properties of the potential function analysis still hold for the same reason as for the first potential in the section. For the third property, we consider a path compression along $p_1 \to p_2 \to \cdots \to p_m$
with max sizes $s_1, s_2, \ldots, s_m$ respectively.
The above analysis shows that the decrease of the first term of the potential pays for all edges along the path except for (1) $p_{m-1} \to p_m$ and (2) edges $p_i \to p_{i+1}$ satisfying $s_i \le (\frac{1 + \log s_{i+1}}{8})^4$.

It remains to use the decrease of the second term of the potential to pay for the second subset of edges.
Since such an edge $p_i \to p_{i+1}$ satisfies $s_i \le (\frac{1 + \log s_{i+1}}{8})^4$, we have
%our analysis of the second term needs to account for a potential decrease of $\frac{1}{2}$ per index such that
%
% \begin{align}
% s_{i}
% \leq
% \left( 1 + \log s_{i+1} \right)^4
% \end{align}
% The definition of $\logs$ gives
\[
\logs \left(s_{i} \right)
\leq
\logs \left(s_{i+1} \right) - 1
\leq
\logs \left(s_{m} \right) - 1
~.
\]Hence, the decrease in potential function in the second term (of node $p_{i-1}$) is
at least
\begin{align*}
\sqrt{s_{i - 1}}
\left( \frac{1}{1 + \logs s_{i}}
- \frac{1}{1 + \logs s_{m}} \right)
\geq
\sqrt{s_{i - 1}} \cdot
\frac{1}{2\left( 1 + \logs s_{i} \right)^2}
~.
\end{align*}
Either this is at least $\frac{1}{2^{13}}$, covering the cost of edge $p_i \to p_{i+1}$, or we have
\begin{equation}
\label{eqn:layer-2}
s_{i - 1} \leq \left( \frac{1 + \logs s_{i}}{8} \right)^4
~.
\end{equation}
We now define
\[
\logss\left( x \right)
\defeq
\min_{k}
\left[
\left( y \leftarrow \left( \frac{1 + \logs y}{8} \right)^4 \right)^{\left( k \right)}
\left( x \right)
\leq
1
\right].
\]
By the definition of $\logss$ and that $1 \le s_1 < s_2 < \dots < s_m \le n$, Equation~\eqref{eqn:layer-2} holds for at most $O\big(\logss n\big)$ different $p_i$'s.
Hence, the amortized cost is $O(\logss n)$ by Lemma~\ref{lem:amortized-cost}.

We can further define the third-level potential function by adding a $\frac{1}{1+\logss}$ term, and so forth.
%Proceeding this way essentially reconstructs the inverse Ackermann function proof~\cite{CLRS:book3}.
%, and we discuss this connection below.

%\begin{align}
%\Phi^{\left( 3 \right)} \left( p \right)
%& \defeq
%\sqrt{\size\left( p \right)}
%\cdot
%\left(
%1
%+ 
%\frac{1}{1 + \log \left( \size \left( \parent \left( p %\right)\right) \right)}
%+
%\frac{1}{1 + \logs \left( \size \left( \parent \left( p %\right)\right) \right)}
%+
%\frac{1}{1 + \logss \left( \size \left( \parent \left( p %\right)\right) \right)}
%\right)
%\end{align}

\section{Proof Using the Original Definition of Ackermann Functions}
\label{sec:direct}

This section directly considers the original definition of the Ackermann functions. 
%Throughout the proof, numerical superscripts on functions will refer to repeated function applications.
Recall that the Ackermann functions $A_k: \mathbb{N} \to \mathbb{N}$, $k \ge 0$, are recursively defined as follows
\[
    A_0(\ell) = \ell+1 \quad,\qquad
    A_{k+1}(\ell) = A_k^{(\ell+1)}(1) ~.
\]

%Rather than working with the max sizes directly, 
We will consider $r(p) = \left\lfloor \log \sizemax(p) \right\rfloor$
% satisfying the following properties:
% \begin{itemize}
%     \item $f$ is increasing.
%     \item $f(2k) > f(k)$ for all $k > 0$.
%     \item $\alpha(f(k)) = O(\alpha(k))$.
%     \item $f(k)=O\left(\frac{k}{(1+\log k)^2}\right)$.
% \end{itemize}
%
%Further, define $r(p) = f(\sizemax(p))$ for each node $p$ 
with two properties:
(1) $r(\parent(p)) \ge r(p) + 1$ for any non-root node $p$, and 
(2) $r(\parent(p)) = O \big( \frac{\sizemax(p)}{(1+\log\sizemax(p))^2} \big)$.
We remark that it suffices to use $p$'s rank at the end for $r(p)$ in union-by-rank.
%
%By the same argument as before (described in Preliminaries in Section~\ref{sec:prelim}),
%we consider a tree undergoing only partial path compressions.
%That is, we start off with a tree with initial node sizes
%$\sizemax(p)$ such that $\sizemax(\parent(p)) \ge 2 \cdot \sizemax(p)$.
%
%\iffalse
%We first define a monotone function $r: \mathbb{N} \rightarrow \mathbb{N}$ such that
%\[
%r\left( x \right)
%=
%\left \lfloor
%\sqrt{10 x}
%\right \rfloor
%\]
%that acts as a proxy for ranks of nodes,
%and overnode notation to define its for nodes
%\[
%r\left( p \right)
%\defeq
%r \left( \sizemax \left( p \right) \right)
%=
%\left \lfloor
%\sqrt{10 \cdot \sizemax \left( p \right)}
%\right \rfloor.
%\]
%\fi
%
For any non-root node $p$, define its potential as
%Now, in the union-find forest with edges directed from parent to child, we can define the potential of $p$ to its parent $\parent(p)$ as:
\[
\Phi(p)
~\defeq~
\sum_{k = 0}^{\alpha(n)}
~ \sum_{\ell = 1}^{ r\left(p\right)}
~ \one \left(A_k^{(\ell+1)}\left( r\left( p \right) \right)
>
r\left(\parent\left( p \right) \right)
\right)
~.
\]

It is similar to the potential function in the last section in the sense that it is also increasing in $\sizemax(p)$ (through $r(p)$), which is fixed throughout, and is also decreasing over time for any non-root node $p$ because $\sizemax(\parent(p))$ and thus $r(\parent(p))$ is nondecreasing (Lemma~\ref{lem:parent-size-monotone}).

As a result, the first property (monotonicity) of potential function analysis holds.
%By Lemma~\ref{lem:monotonicity}, the $r(p)$ is fixed in a step that does not keep $p$ as a root node. By Lemma~\ref{lem:geometric-decrease}, the $r(\parent(p))$ is increasing over time. Hence the first property of potential function analysis holds.
%
The second property (boundedness) also holds because by definition $\Phi(p) \le (\alpha(n) + 1) r(p)$ and thus
\[
    \frac{\Phi(p)}{\alpha(n)}\le O(r(p)) = O(\log \sizemax(p))
    ~.
\]

For the third property, consider a path compression along $p_1 \to p_2 \to \cdots \to p_m$, where $p_m=r$ is the root.
For any node $p_i$, $1 \le i < m$, with parent $p_{i+1}$,
let $k_i$ be the largest $k \ge 0$ for which %
\[
    A_k(r(p_i)) \le r(p_{i+1})
    ~.
\]
This is well-defined since
$A_0\left( r\left( p_i \right) \right)
=
r\left( p_i \right) + 1
\le
r\left( p_{i+1} \right)$. % by Lemma~\ref{lem:geometric-decrease}.
Further, we have $k_i \le \alpha(n)$
because $A_{\alpha(n)+1}(1) > n \ge \sizemax(p_{i+1}) \ge r(p_{i+1})$ by the definition of inverse Ackermann function $\alpha(n)$.

Next, we let $\ell_i$ be the largest $\ell \ge 1$ for which 
%and then $l$ be the largest $l$ for which
\[
A_{k_i}^{(\ell)}\left(r\left( p_{i} \right) \right)
\le
r\left( p_{i + 1} \right)
~.
\]
%
%by Lemma~\ref{lem:geometric-decrease} and the properties of $f$.
We have $\ell_i \le r(p_i)$, because by our choice of $k_i$ 
\[
    r(p_{i+1}) < A_{k_i+1} \big( r(p_i) \big) = A_{k_i}^{(r(p_i)+1)}(1) \le A_{k_i}^{(r(p_i)+1)}\big( r(p_i) \big) ~.
% \left( p_{i} \right) + 1}
% \left(r\left( p_{i} \right) \right)
% \ge
% A_k^{r\left( p_{i}  \right) +1}\left(1\right)
% =
% A_{k+1}^1\left(r\left( p_{i} \right)\right).
\]
%but $k$ is maximal. 

For any $0 \le k \le \alpha(n)$, consider all $p_i$'s with $k_i = k$.
We next show that all these nodes, except the one closest to the root, have their potential function values decreased by at least $1$.
Consider any such node $p_i$.
Its old parent $p_{i+1}$ satisfies
\[
    A_k^{\ell_i}\big(r(p_i)\big) \le r(p_{i+1}) < A_k^{\ell_i+1}\big(r(p_i)\big)
    ~.
\]
Further, since it is not the closest to the root among these vertices, there is $j > i$ such that $k_j = k$, which means $r(p_{j+1}) \ge A_k(r(p_j))$.
Noting that $r(p_m) \ge r(p_{j+1})$ and $r(p_j) \ge r(p_{i+1})$, we have
\[
    r(p_m) \ge A_k\big(r(p_{i+1})\big) \ge A_k^{\ell_i+1}\big(r(p_i)\big)
    ~.
\]
Therefore, changing $p_i$'s parent from $p_{i+1}$ to $p_m$ makes the indicator for $k = k_i$ and $\ell = \ell_i$ change from $1$ to $0$ in $p_i$'s potential function.

In sum, for each $0 \le k \le \alpha(n)$, there is at most one node with $k_i = k$ whose potential function fails to decrease.
The amortized cost is therefore $O(\alpha(n))$ by Lemma~\ref{lem:amortized-cost}.

\section{Conclusion}
\label{sec:conclusion}

We gave potential-function-based analyses of the union-find
data structure that are closer to other amortized potential
functions used to analyze tree-based data structures.
These proofs differ from existing proofs in that they no longer
require explicit definitions of the Ackermann function.
Nonetheless, they are able to naturally interpolate to the
optimal $O(\alpha(n))$ bound using the $k$th-iterated logs based characterization of
the inverse Ackermann function.

From a presentation perspective, the authors believe that only
the $O(\log\log{n})$ and $O(\log^{*}n)$ proofs 
(Sections~\ref{sec:loglog} and~\ref{sec:logstar} respectively)
are significantly simpler than proofs via the Ackermann
function.
Our current attempts at extending these proofs beyond $\log^{*}n$
require potential functions involving both a node and its parent,
and are essentially continuous variants of the proofs that
explicitly define the inverse Ackermann function.
We pose as an open question whether potential functions
based solely on $\size(p)$ can also give bounds of
$\log^{**}n$ or better.

\bibliographystyle{alpha}
\bibliography{ref}

\begin{thebibliography}{FGMT13}

\bibitem[DBS18]{DBS18}
Laxman Dhulipala, Guy~E. Blelloch, and Julian Shun.
\newblock Theoretically efficient parallel graph algorithms can be fast and
  scalable.
\newblock In Christian Scheideler and Jeremy~T. Fineman, editors, {\em
  Proceedings of the 30th on Symposium on Parallelism in Algorithms and
  Architectures, {SPAA} 2018, Vienna, Austria, July 16-18, 2018}, pages
  393--404. {ACM}, 2018.

\bibitem[FGMT13]{FGMT13}
Wojciech Fraczak, Loukas Georgiadis, Andrew Miller, and Robert~Endre Tarjan.
\newblock Finding dominators via disjoint set union.
\newblock {\em J. Discrete Algorithms}, 23:2--20, 2013.

\bibitem[FS89]{FS89}
Michael~L. Fredman and Michael~E. Saks.
\newblock The cell probe complexity of dynamic data structures.
\newblock In David~S. Johnson, editor, {\em Proceedings of the 21st Annual
  {ACM} Symposium on Theory of Computing, May 14-17, 1989, Seattle, Washington,
  {USA}}, pages 345--354. {ACM}, 1989.

\bibitem[FSST86]{FSST86}
Michael~L. Fredman, Robert Sedgewick, Daniel~Dominic Sleator, and Robert~Endre
  Tarjan.
\newblock The pairing heap: {A} new form of self-adjusting heap.
\newblock {\em Algorithmica}, 1(1):111--129, 1986.

\bibitem[GF64]{GF64}
Bernard~A. Galler and Michael~J. Fisher.
\newblock An improved equivalence algorithm.
\newblock {\em Commun. ACM}, 7(5):301–303, may 1964.

\bibitem[GI91]{GI91}
Zvi Galil and Giuseppe~F. Italiano.
\newblock Data structures and algorithms for disjoint set union problems.
\newblock {\em ACM Comput. Surv.}, 23(3):319–344, sep 1991.

\bibitem[Gup19]{G19}
Anupam Gupta.
\newblock 15-451/651: Design \& analysis of algorithms, lecture 4: Union-find
  and msts, January 2019.
\newblock Available
  at:~\url{https://www.cs.cmu.edu/\textasciitilde15451-s19/lectures/lec04-unionfind.pdf}.

\bibitem[Hel85]{H85}
Pavol Hell.
\newblock On the history of the minimum spanning tree problem.
\newblock {\em Annals of the History of Computing}, 7(1):43--57, 1985.

\bibitem[HU73]{HU73}
John~E. Hopcroft and Jeffrey~D. Ullman.
\newblock Set merging algorithms.
\newblock {\em {SIAM} J. Comput.}, 2(4):294--303, 1973.

\bibitem[LS13]{LS13}
Jakub Lacki and Piotr Sankowski.
\newblock Reachability in graph timelines.
\newblock In Robert~D. Kleinberg, editor, {\em Innovations in Theoretical
  Computer Science, {ITCS} '13, Berkeley, CA, USA, January 9-12, 2013}, pages
  257--268. {ACM}, 2013.

\bibitem[Sei06]{S06}
Raimund Seidel.
\newblock Understanding the inverse ackermann function, 2006.
\newblock Available at~\url{https://cgi.di.uoa.gr/~ewcg06/invited/Seidel.pdf}.

\bibitem[ST85]{ST85}
Daniel~Dominic Sleator and Robert~Endre Tarjan.
\newblock Self-adjusting binary search trees.
\newblock {\em J. {ACM}}, 32(3):652--686, 1985.

\bibitem[ST86]{ST86}
Daniel~Dominic Sleator and Robert~Endre Tarjan.
\newblock Self-adjusting heaps.
\newblock {\em {SIAM} J. Comput.}, 15(1):52--69, 1986.

\bibitem[Tar75]{Tarjan75}
Robert~Endre Tarjan.
\newblock Efficiency of a good but not linear set union algorithm.
\newblock {\em J. ACM}, 22(2):215--225, 1975.

\bibitem[Tar79]{T79}
Robert~Endre Tarjan.
\newblock A class of algorithms which require nonlinear time to maintain
  disjoint sets.
\newblock {\em J. Comput. Syst. Sci.}, 18(2):110--127, 1979.

\end{thebibliography}

\begin{appendix}

\section{Ackermann Function and Its Alternate Characterizations}
\label{sec:ackermann}

First, recall that the Ackermann functions $A_k: \mathbb{N} \to \mathbb{N}$, $k \ge 0$, are recursively defined as follows
\[
    A_0(\ell) = \ell+1 \quad,\qquad
    A_{k+1}(\ell) = A_k^{(\ell+1)}(1) ~.
\]

Further recall that superscript $*$ means the number of iterative applications of a function until the value is at most $1$.
We will write $f^{* \times k}$ for function
\[
    f^{\overbrace{** \dots *}^{\text{$k$ times}}}
    ~.
\]
% First, we define a sequence of fast-growing functions on the naturals $A_k$ for $k \in \omega$. Let $A_0(n) = n+1$ and $A_{k+1}(n) = A^{(n+1)}_k(1)$ where the exponent refers to repeated function application. In particular, $A_m(n)$ satisfies the recurrences of the traditional definition of the Ackermann function $A(m, n)$.

\begin{lemma}
The inverse Ackermann function $\alpha(n)$ is within a constant factor of the minimum natural number $k$ such that $\log_2^{* \times k}(n) = 1$.
\begin{proof}
Define $B_k(x) = \min\{i \in \mathbb{N} : A_k(i) \ge x\}$.
By definition, we have
\[
    B_k(A_k(x)) = x ~,\quad
    A_k(B_k(x)) \ge x ~.
\]
Hence,
$$B_{k+1}(x) = \min\{i \in \mathbb{N} : A_{k+1}(i) \ge x\} = \min\{i \in \mathbb{N} : A_k^{(i+1)}(1) \ge x\} = \min\{i \in \mathbb{N} : 1 \ge B_k^{(i+1)}(x)\}$$
where the last conditions imply each other by applying $A_k^{i+1}$ or $B_k^{i+1}$ to both sides. 
In particular, we get that $B_{k+1}$ equals $B_k^*(x)-1$ (unless $B_k^*(x)$ was zero, in which case $B_{k+1}$ would be zero).

The first few Ackermann functions are
\[
A_0(x) = x+1 ~,\quad A_1(x) = x+2 ~,\quad A_2(x) = 2x+3 ~,\quad A_3(x) = 2^{x+3} - 3
\]
each of which can be easily verified by induction. 
Therefore
\[
    B_3(x) = \min\{i \in \mathbb{N} : 2^{i+3}-3 \ge x\} = \max \big\{ \lfloor \log_2(x+3) \rfloor - 3, 0 \big\}
    \le \log_2 x
    ~.
\]
Combining this and $B_{k+1}(x) = \max \{ B_k^*(x) - 1, 0 \} \le B_k^*(x)$, we have
\[
    B_{k+3}(x) \le \log_2^{* \times k}(x)
    %~.
\]
whenever the right-hand-side is greater than $1$.

To attain the other direction, first note that if
$f : \mathbb N^+ \to \mathbb N$
satisfies
$f(x) < x$ for all $x$
then $f(x) > f^*(x)$ when $f(x) > 1$ and otherwise $f^*(x) \le 1$.
Applying this we get
\[
B_3\left(x\right)
=
\left\lfloor \log_2\left(x + 3\right)\right\rfloor
-
3 \ge \left \lfloor \log_2(x) \right \rfloor
-
3 \ge \log_2^{***}\left(x\right)
\]
when the right hand side is greater than one
--- and the $^*$ operator does consider what values that function takes on $0$ or $1$.
Going further we can induct on $k$ to get
\[
B_{k+3}\left(x\right)
\ge
B_{k+2}^{*}\left(x\right)
-
1
\ge
B_{k+2}^{**}\left(x\right)
\ge
\left(\log_2^{* \times \left(2k+1\right)}\right)^{**}\left(x\right)
=
\log_2^{* \times \left(2k+3\right)}\left(x\right)
\]
whenever the right hand side is greater than one.

By definition,
$$\alpha(x) = \min\{i \in \omega : A_i(1) \ge x\} = \min\{i \in \omega : B_i(x) \le 1\}$$
and let $\alpha'(x) = \min\{i \in \omega : \log_2^{* \times k}(x) = 0 \}$. From the bounds above if $\log_2^{* \times (2k+3)}(x) > 1$ then $B_{k+3}(x) > 1$ so $\alpha' = O(\alpha)$ and if $B_{k+3}(x) > 1$ then $\log_2^{* \times k}(x) > 1$ so $\alpha = O(\alpha')$.
\end{proof}
\end{lemma}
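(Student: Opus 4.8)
The plan is to interpose the family of ``inverse Ackermann'' functions
\[
    B_k(x) \defeq \min\{\, i \in \mathbb{N} : A_k(i) \ge x \,\},
\]
so that $\alpha(n)$ is exactly the least $k$ with $B_k(n) \le 1$, while the quantity in the lemma is, up to an additive constant, the least $k$ with $\log_2^{* \times k}(n) \le 1$. It then suffices to show that, modulo shifting the index by an additive constant and by a factor of $2$, the $B_k$ hierarchy and the iterated-$^*$ hierarchy built on $\log_2$ agree. The engine of the argument is a single recursion: unfolding $A_{k+1}(i) = A_k^{(i+1)}(1)$ inside the definition of $B_{k+1}$, and using that $A_k$ and $B_k$ are inverse in the sense $B_k(A_k(x)) = x$ and $A_k(B_k(x)) \ge x$, one reads off that $B_{k+1}(x) = B_k^*(x) - 1$, except that it equals $0$ when $B_k^*(x) = 0$. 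Informally: climbing one level of the Ackermann hierarchy is the same as applying one more $^*$, up to an additive $1$.

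For the base case I would compute the first few Ackermann functions explicitly by induction: $A_0(x) = x+1$, $A_1(x) = x+2$, $A_2(x) = 2x+3$, and $A_3(x) = 2^{x+3}-3$. This yields the closed form $B_3(x) = \max\{\lfloor \log_2(x+3)\rfloor - 3,\, 0\}$, which lies between a small additive shift of $\lfloor \log_2 x\rfloor$ and $\log_2 x$. So $B_3$ and $\log_2$ agree up to lower-order additive terms, and iterating the recursion upward from $k = 3$ should convert $B_{k+3}$ into $\log_2^{* \times k}$, again up to bookkeeping.

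Running the recursion in the easy direction is immediate: since $B_{k+1}(x) \le B_k^*(x)$ and $B_3(x) \le \log_2 x$, induction gives $B_{k+3}(x) \le \log_2^{* \times k}(x)$ whenever the latter exceeds $1$ (the boundary case follows from one further unfolding), and hence $\alpha(n) \le \alpha'(n) + O(1)$, where $\alpha'$ denotes the quantity in the lemma. The other direction is where the $-1$ genuinely bites, and this is the step I expect to be the main obstacle: one cannot simply discard it. The remedy is the elementary fact that for any $f \colon \mathbb{N}^+ \to \mathbb{N}$ with $f(x) < x$ one has $f(x) - 1 \ge f^*(x)$ as long as $f(x) > 1$ (and $f^*(x) \le 1$ otherwise). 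Applying this with $f = B_k^*$ upgrades $B_{k+1}(x) = B_k^*(x) - 1$ to $B_{k+1}(x) \ge B_k^{**}(x)$, so each level of the recursion now costs \emph{two} $^*$'s instead of one; similarly the $-3$ in the formula for $B_3$ costs only a bounded number of $^*$'s. Telescoping then gives $B_{k+3}(x) \ge \log_2^{* \times (2k+3)}(x)$ whenever the right-hand side exceeds $1$, so $\alpha'(n) \le 2\,\alpha(n) + O(1)$. Combining the two inequalities yields $\alpha(n) = \Theta(\alpha'(n))$, as claimed. The points requiring care are the off-by-one behavior of the $^*$ operator at the arguments $0$ and $1$, the floor inside $B_3$, and the fact that the per-level doubling forces a constant-\emph{factor} rather than an additive-constant comparison between the two quantities.
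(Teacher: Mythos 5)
Your proposal follows essentially the same route as the paper's own proof: the same inverse functions $B_k$, the same recursion $B_{k+1}(x) = B_k^*(x) - 1$, the same explicit computation of $A_0,\dots,A_3$ to anchor $B_3$ against $\log_2$, the same easy direction $B_{k+3}(x) \le \log_2^{* \times k}(x)$, and the same doubling trick (converting the $-1$ into an extra $^*$ via the fact that $f^*(x) \le f(x) - 1$ when $f(x) < x$) to obtain $B_{k+3}(x) \ge \log_2^{* \times (2k+3)}(x)$ and hence the constant-factor equivalence. The argument is correct as sketched; no further comment is needed.
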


\iffalse
Also, define the function
$A: \mathbb{N}^*\times \mathbb{N}^*\to \mathbb{N}^*$
\todo{$\widehat{A}$?}
as follows:
\[A(0,n)= 4^{2^n}\]
\[A(m+1,0)= A(m,1)\]
\[A(m+1,n+1)=A\left(m,\sqrt[4]{A(m+1,n)}\right).\]
\begin{lemma}
    $A$ is a well-defined integer function.
\end{lemma}
\begin{lemma}
    $A$ grows like the Ackermann function.
\end{lemma}
\begin{lemma}
    $A$ is increasing in each variable.
\end{lemma}

\fi

%Define $\alpha(n)$ to be the maximal nonnegative integer $k$ with $A(k,0)\le n$.

%Consider the potential function
%\[\Phi(p):=\sqrt{\size(p)}\left(1+\sum_{k=0}^{\alpha(n)}  \sum_{l=0}^{\infty}\frac{\one\left(A(k,l)>\size(\parent(p))\right)}{(l+1)^2}\right)\]
%Then $\sum_p\Phi(p)= O(n\alpha(n))$. Also $\Phi(p)$ is increasing with respect to $\size(p)$ and nonincreasing with respect to $\size(\parent(p))$.

\section{Comparing Iterated
Modified Functions with $\log^{*}$}
\label{app:logstar}

We check that our lower-order modifications
to the inputs of logarithm still give iteration
counts comparable to iterated logarithms.

We first prove that iterating function $( 1 + \frac{1}{3} \log x )^2$ twice is smaller than $\log{x}$ for any $x \ge 8$.

\begin{lemma}
\label{lem:badlogtwice}
Consider the function $f(x) = ( 1 + \frac{1}{3} \log x )^2$.
For all $x \geq 8$, we have
\[
f\left( f \left( x \right) \right) \leq \log{x}.
\]
\end{lemma}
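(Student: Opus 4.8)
The plan is to substitute $y = \log x \ge 3$ and show that $f(f(x)) \le y$, where $f(x) = (1 + \tfrac13 \log x)^2$. Writing $g(y) = \log f(x) = 2\log(1 + \tfrac13 y)$, the claim $f(f(x)) \le \log x = y$ is equivalent to $g(g(y)) \le \log y$, i.e. after one application of $g$ we land in a regime where a second application undershoots $\log y$. So first I would analyze the single-step map $y \mapsto g(y) = 2\log(1 + \tfrac13 y)$ on $[3,\infty)$.

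The key observation is that $g$ grows very slowly: for $y \ge 3$ one checks $1 + \tfrac13 y \le y$ with room to spare (indeed $1 + \tfrac13 y \le \tfrac23 y$ already for $y \ge 3$), so $g(y) \le 2\log(\tfrac23 y) = 2\log y - 2\log\tfrac32 \le 2\log y - 1$. More usefully, I want a bound of the form $g(y) \le c\sqrt{y}$ or even $g(y) \le \log y + O(1)$ that is strong enough that iterating once more beats $\log y$. The cleanest route: show (i) for $y \ge 8$, $g(y) \le y - 1 < y$, so $g$ is a strict contraction toward small values, and (ii) pin down the behavior on the bounded interval where $g(y)$ may still be moderately large. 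Concretely, $g$ is increasing, so it suffices to bound $g(g(y))$ by $\log y$ for the worst case; since $\log y$ is increasing, and $g(g(\cdot))$ is increasing, one reduces to checking the inequality at $y = 3$ (the boundary $x = 8$) plus verifying that the gap $\log y - g(g(y))$ is nondecreasing, or alternatively just directly bounding $g(g(y))$ against $\log y$ for all $y\ge 3$ via elementary estimates. At $y = 3$: $g(3) = 2\log 2 = 2$, then $g(2) = 2\log(1+\tfrac23) = 2\log\tfrac53 \approx 1.47$, and $\log 3 \approx 1.58$, so the inequality holds at the boundary with a small slack.

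For the general $y$, I would split into two ranges. For $y$ in a bounded range like $[3, 64]$, one notes $g(y)$ stays small (at $y = 64$, $g(64) = 2\log(1 + \tfrac{64}{3}) = 2\log\tfrac{67}{3} \approx 8.97$, and then $g(8.97) \approx 2\log(1+2.99) = 2\log 3.99 \approx 3.99$, which is well below $\log 64 = 6$) and verify the inequality holds throughout this compact interval, e.g. by monotonicity and checking that $\log y - g(g(y))$ is positive and that its derivative does not cause a violation. For large $y$, since $g(y) \le 2\log y - 1$ and then $g$ applied again gives $g(g(y)) \le 2\log(2\log y) \le 2\log\log y + 2$, which is $o(\log y)$ and is below $\log y$ for, say, $y \ge 16$; combining the two ranges covers all $y \ge 3$.

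I expect the main obstacle to be purely bookkeeping: choosing the split point between the "small $y$" compact interval and the "large $y$" asymptotic regime so that the crude bound $g(g(y)) \le 2\log\log y + 2 \le \log y$ kicks in exactly where the compact-interval check ends, and making sure the compact-interval verification is genuinely monotone (so that checking finitely many points, or just the endpoints, suffices) rather than requiring a delicate derivative argument. There is no conceptual difficulty — $f\circ f$ is clearly sub-logarithmic since $f$ itself is only polylogarithmic in $x$ — so the proof is a matter of organizing elementary inequalities cleanly.
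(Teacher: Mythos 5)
Your reduction is sound: setting $y=\log x\ge 3$ and $g(y)=2\log\bigl(1+\tfrac{y}{3}\bigr)$, the claim is indeed equivalent to $g(g(y))\le\log y$, and your first suggested route --- check the inequality at the boundary $y=3$ and show the gap is nondecreasing --- is essentially the paper's proof. The paper just works one exponential level down: it shows $y-\bigl(1+\tfrac{2}{3}\log(1+\tfrac{y}{3})\bigr)^2\ge 0$ by checking $y=3$ (where the value is $\tfrac{2}{9}$) and proving the derivative of the left-hand side is positive using $\ln(1+z)\le z$. That derivative computation is the entire content of the lemma, and your proposal defers it.

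The concrete problem with your fallback two-range plan is the threshold. From $g(g(y))\le 2+2\log\log y$ you need $2+2\log v\le v$ with $v=\log y$; at $y=16$ this reads $6\le 4$, which is false, and the inequality only takes hold around $v=8$, i.e.\ $y\ge 256$, meaning $x\ge 2^{256}$. So the ``compact interval'' is $y\in[3,256]$, and on an interval that long you cannot verify the inequality by spot checks at a few points: you need either a monotonicity argument for $\log y - g(g(y))$ (which you have not proved --- and note that monotonicity of $y - f(f(x))$, which the paper establishes, does not formally transfer to its logarithmic counterpart without an extra word) or a Lipschitz/derivative bound, which is exactly the computation you were trying to avoid. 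The strategy is salvageable, but as written the asymptotic regime starts far later than claimed and the finite regime is not actually verified, so the proof is incomplete. The most economical fix is to commit to your first route and carry out the derivative estimate, as the paper does.
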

%
%\[
%    \left( 1 + \frac{1}{3} \log \left( \Big( 1 + \frac{1}{3} \log x \Big)^2 \right) \right)^2 \le \log x
%\]

\begin{proof}
Let $y = \log x$.
Then $y \geq 3$, and $y$ is monotonically increasing in $x$.
The condition that we want to prove is equivalent to
\[
    y - \Big( 1 + \frac{2}{3} \log \big( 1 + \frac{y}{3} \big) \Big)^2 \ge 0
    ~.
\]

This is true for $y = 3$, because $3 - (1+\frac{2}{3})^2 = \frac{2}{9} > 0$.
It remains to verify that the left-hand-side is increasing.
The left-hand-side's derivative is:
\begin{align*}
    1 - 2 \Big( 1 + \frac{2}{3} \log \big( 1 + \frac{y}{3} \big) \Big) \frac{2}{3 \ln 2(3+y)}
    &
    = 
    \frac{3\ln 2(3+y) - 4 \left( 1 + \frac{2}{3} \log ( 1 + \frac{y}{3} ) \right) }{3\ln 2(3+y)} \\
    &
    = \frac{(9 \ln 2 - 4) + (3 \ln 2) y - \frac{8}{3} \log ( 1 + \frac{y}{3} )}{3\ln 2(3+y)} \\
    &
    \ge \frac{(9 \ln 2 - 4) + (3 \ln 2 - \frac{8}{9 \ln 2}) y}{3\ln 2(3+y)} > 0
    ~,
\end{align*}
where the second last inequality follows by $\ln(1+z) \le z$.
\end{proof}

We now turn to the iterated variants of such function
used in Section~\ref{sec:extensions}.

\begin{lemma}
\label{lem:logs}
The function 
\[
\logs\left( x \right)
=
\min_{k}
\left[
\left( y \leftarrow \left( \frac{1 + \log y}{8} \right)^4 \right)^{\left( k \right)}
\left( x \right)
\leq
1
\right]
\]
satisfies $\logs(x) \le 2 \log^{*}(x)$.
\end{lemma}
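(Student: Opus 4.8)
The plan is to show that the modified iteration $g(x) = \left( \frac{1 + \log x}{8} \right)^4$ decreases values at least as fast as two rounds of ordinary $\log$, so that $\logs$ — the number of iterations of $g$ needed to reach $1$ — is at most twice $\log^*$. The natural way to make this precise is to exhibit a threshold $x_0$ (a constant) such that for all $x \ge x_0$,
\[
    g\bigl(g(x)\bigr) \le \log x ~.
\]
Given such an inequality, each pair of $g$-steps (once we are above $x_0$) makes at least as much progress as one $\log$-step, while below $x_0$ only $O(1)$ further $g$-steps are needed to reach $1$; this yields $\logs(x) \le 2\log^*(x) + O(1)$, and since $\log^*(x) \ge 1$ for $x > 1$ one can absorb the additive constant into the factor, or simply state the bound with a $+O(1)$ and note $\logs = O(\log^*)$. (If one wants the clean constant $2$ with no slack, choose $x_0$ small enough and check the boundary cases by hand.)

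The key steps, in order: (i) Reduce $g(g(x)) \le \log x$ to a one-variable inequality by substituting $y = \log x$; after taking $\log$ of $g(x) = \left(\frac{1+y}{8}\right)^4$ we get $\log g(x) = 4\log\frac{1+y}{8} = 4\log(1+y) - 12$, and then $g(g(x)) = \left(\frac{1 + 4\log(1+y) - 12}{8}\right)^4 = \left(\frac{4\log(1+y) - 11}{8}\right)^4$, so the claim becomes
\[
    y - \left(\frac{4\log(1+y) - 11}{8}\right)^4 \ge 0
\]
for all $y$ at least some constant. (ii) Verify the inequality at the base point $y = y_0$ directly. (iii) Show the left-hand side is increasing in $y$ for $y \ge y_0$ by differentiating: the derivative is $1 - 4\left(\frac{4\log(1+y)-11}{8}\right)^3 \cdot \frac{4}{8\ln 2 (1+y)}$, and one bounds the subtracted term using $\log(1+y) \le y/\ln 2$ (or more crudely $\log(1+y) = O(y^{1/3})$, which is enough since a cube of an $O(y^{1/3})$ quantity divided by $y$ is $o(1)$) to conclude it is eventually below $1$. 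This mirrors exactly the argument in Lemma~\ref{lem:badlogtwice}, just with the power $4$ and the constant $8$ in place of the squaring and the $1/3$.

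The main obstacle is bookkeeping rather than conceptual: pinning down an honest constant $x_0$ (equivalently $y_0$) for which both the base case and the monotonicity of the derivative genuinely hold, since the polynomial $\left(\frac{4\log(1+y)-11}{8}\right)^4$ grows faster than $\log y$ near small $y$ and only becomes dominated once $y$ is moderately large. I would handle this by being generous — pick $y_0$ comfortably large (a concrete numerical value), check the base case numerically, and in the derivative bound use the slack to avoid delicate constant-chasing — and then absorb any finite initial segment of $g$-iterations into the $O(1)$ term, which is harmless for an asymptotic $\logs(x) \le 2\log^*(x)$ statement. The only subtlety to flag is that $g$ maps its input strictly below it once $x$ exceeds a constant (so the iteration does terminate), which follows from the same one-variable analysis and should be noted before invoking the iteration count.
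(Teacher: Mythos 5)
Your proposal is correct and structurally the same as the paper's proof: both reduce the lemma to the key inequality $g(g(x)) \le \log x$ for $g(x) = \bigl(\frac{1+\log x}{8}\bigr)^4$ (together with a termination check that $g$ strictly decreases its input), and then count iterations of $g$ in pairs against iterations of $\log$. The only difference is how the resulting one-variable inequality is verified --- the paper relaxes $1+z$ to $2z$, takes fourth roots, and bounds the maximum of $2\log u - u$, whereas you use the base-case-plus-increasing-derivative method of Lemma~\ref{lem:badlogtwice} --- and since your inequality $y \ge \bigl(\frac{4\log(1+y)-11}{8}\bigr)^4$ in fact holds for all $y \ge 1$ (the cubed factor in the derivative is negative for $y \lesssim 5.7$ and its positive part stays well below $1$ thereafter), the additive $O(1)$ slack you were prepared to absorb is unnecessary and the clean constant $2$ comes out directly.
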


\begin{proof}
% For the lower bound,
% since
% $(1 + \log{y})^4 \geq \log{y}$,
% we get $\logs(x) \geq \log^{*}(x) + \logs(10^5)$,
% which for $x \geq 10^5$ gives
% $\logs(x) \in \Omega(\log^{*}(x) )$.
%
% For the upper bound, consider $z = \frac{y}{500000}$.
% For any $y \ge 10^6$, we have $z \ge 2$.
% Further, the mapping $y \leftarrow (1 + \log y)^4$ is equivalent to mapping
% %
% \[
% z \leftarrow \frac{1}{500000} \big(1 + \log 500000 + \log z \big)^4
% ~.
% \]
%
% The right-hand-side is smaller than $\log z$ for any $z \ge 2$, because
% %
% \[
%     \frac{1}{500000} (1+x)
% \]
%
% The number of times to iteratively apply $y \leftarrow (1 + \log y)^4$ starting from $x$ until the result is at most $10^5$, is equal to the number of times to iteratively apply 
%
We first verify that $\logs(x)$ is well-defined, by proving for any $y \ge 2$ that
\[
    \left( \frac{1 + \log y}{8} \right)^4 \le \frac{y}{2}
    ~.
\]
Taking logarithms on both sides and changing variables with $z = \log y$, this is equivalent to proving for any $z \ge 1$ that
\[
    4 \big( \log (1+z) - 3 \big) \le z - 1
    ~.
\]
Rearranging terms, we can further rewrite it as
\[
    \log \left( \frac{1+z}{2^\frac{11}{4}} \right) \le \frac{z}{4}
    ~.
\]
This follows because the left-hand-side is at most
\[
    \log \left( 1 + \frac{z}{2^\frac{11}{4}} \right) \le \frac{z}{2^{\frac{11}{4}} \ln 2} < \frac{z}{4}
    .
\]

It remains to very that applying function $y \leftarrow (\frac{1 + \log{y}}{8})^4$ twice yields a value that is at most $\log y$ for any $y \ge 2$.
Changing variables with $z = \log y$, we need to prove for any $z \ge 1$ that
\[
    \left( \frac{1 + 4 \log \frac{1+z}{8}}{8} \right)^4 \le z
    ~.
\]
Relaxing $1 + z$ to $2z$ and taking fourth root on both sides, it suffices to prove
\[
    \frac{4 \log z - 7}{8} \le \sqrt[4]{z}
    ~.
\]
Changing variables with $u = \sqrt[4]{z}$, this is equivalent to
\[
    2 \log u - \frac{7}{8} \le u
    ~.
\]
This holds because $2 \log u - u$ achieves maximum value approximately $0.172 < \frac{7}{8}$ at $u = \frac{2}{\ln 2}$.
% Relaxing $1+z$ to $2z$ and simplifying the left-hand-side, it suffices to show
% %
% \[
%     \left( \frac{4\log z - 7}{8} \right)^4 \le z
%     ~.
% \]
%
% Taking logarithms on both sides and changing variables with $u = \log z$, we need to show for any $u \ge 0$ that
% %
% satisfies $f(f(x)) \leq \log{x}$ for all $x \geq 2$.
%
% Specifically, we have
% \[
% \left( 1 + \log\left(  1 + \log 10^5 \right)^4 \right)^4
% \leq
% \left( 1 + \log\left( 18^4 \right) \right)^4
% \leq
% \left( 1 + 16.5\right)^{4}
% \leq
% 10^{5},
% \]
% and letting $y = \log{x}$, we get that the derivative of
% $y - \left( 1 + \log y\right)^4$ is
% \[
% 1 - \frac{4}{\ln 2}
% \frac{\left( 1 + \log y\right)^{3}}{y}
% \]
% which is non-negative when $y \geq 10^{5}$.
% This then lets us conclude that $\logs(x) \leq O(\log^{*}(x))$.
\end{proof}

\end{appendix}

\end{document}